\newtheorem{theorem}{Theorem}
\newtheorem{proposition}{Proposition}
\newtheorem{lemma}{Lemma}
\theoremstyle{definition}
\newcommand{\TS}{\mathsf{TS}}
\newcommand{\AS}{\mathsf{AS}}
\newcommand{\KAT}{\mathsf{KAT}}
\newcommand{\KAD}{\mathsf{KAD}}
\newcommand{\PHL}{\mathsf{PHL}}
\newcommand{\aran}{\mathit{ar}}
\begin{document}

\title{On the Expressive Power of Kleene Algebra with Domain}

\author{Georg Struth\\University of Sheffield, UK}
%\email{g.struth@sheffield.ac.uk}

\date{}

\maketitle 

\begin{abstract}
  It is shown that antidomain semirings are more expressive than test
  semirings and that Kleene algebras with domain are more expressive
  than Kleene algebras with tests. It is also shown that  Kleene
  algebras with domain are expressive for propositional Hoare logic
  whereas Kleene algebras with tests are not.
\end{abstract}

\pagestyle{plain}

%%%%%%%%%%%%%%%%%%%%%%%%%%%%%%%%%%%%%%%%%%%%%%%%%%%%%%%%%%%%

\section{Introduction}

Kleene algebras with tests ($\KAT$)~\cite{Kozen97} yield arguably the
simplest and most elegant model of the control flow in simple
while-programs. They provide an abstract algebraic view on the
standard relational semantics of imperative programs, have been
applied to various program analysis tasks and form the backbone of
program construction and verification tools. In particular, the
inference rules of propositional Hoare logic ($\PHL$)---Hoare logic
without assignment rule---can be derived in this
setting~\cite{Kozen00}.  Kleene algebras with domain
($\KAD$)~\cite{DesharnaisMS06,DesharnaisStruth} are a similar
formalism that provides an algebraic approach to propositional dynamic
logic and predicate transformer semantics. The inference rules of
$\PHL$ are derivable in $\KAD$ as well and it is known that
every $\KAD$ is a $\KAT$~\cite{DesharnaisStruth}.

From a complexity point of view, the equational theory of $\KAT$ is
known to be PSPACE complete~\cite{KozenS96}, whereas that of $\KAD$ is
decidable in EXPTIME~\cite{MoellerS06}. It seems also plausible that
$\KAD$ is more expressive than $\KAT$; after all, image and preimage
as well as modal box and diamond operators can be defined in the
former algebra. 

This article makes this gap in expressive power precise, showing that
$\KAD$ is strictly more expressive than $\KAT$ with a
simple, natural and interesting example.  Firstly it is shown that the
inverse of the sequential composition rule of $\PHL$, when expressed
as a formula in the language of $\KAT$, is derivable from the axioms
of $\KAD$.  Secondly, a model of $\KAT$ is presented in which this
formula does not hold. In addition it is shown that $\KAT$ is not
expressive for $\PHL$, whereas this is trivially the case for $\KAD$.

Inverting the inference rules of Hoare logic is interesting for
verification condition generation in the context of program
correctness, where intermediate assertions such as weakest liberal
preconditions need to be computed. It is also related to the question
of expressivity of Hoare logic in relative completeness proofs.

%%%%%%%%%%%%%%%%%%%%%%%%%%%%%%%%%%%%%%%%%%%%%%%%%%%%%%%%

\section{$\KAD$ and $\KAT$}

A \emph{semiring} is a structure $(S,+,\cdot,0,1)$ such that $(S,+,0)$
is a commutative monoid, $(S,\cdot, 1)$ is a monoid; and the two
monoids interact via the distributivity laws $x\cdot (y+z)=x \cdot y+
x\cdot z$ and $(x+y)\cdot z= x\cdot z + y\cdot z$ and the
annihilation laws $0\cdot x=0$ and $x\cdot 0=0$.  

A \emph{dioid} is an additively idempotent semiring, that is, $x+x=x$
holds for all $x\in S$. In this case, $(S,+)$ forms a semilattice with
order relation defined as $x\le y\Leftrightarrow
x+y=y$. Multiplication is isotone with respect to the order, $x\le y$
implies both $z\cdot x\le z\cdot y$ and $x\cdot z\le y\cdot z$, and
$0\le x$ holds for all $x\in S$.

A \emph{Kleene algebra} is a dioid expanded by a 
star operation that satisfies the unfold and induction axioms 
\begin{equation*}
  1+x\cdot x^\ast = x^\ast,\qquad 1+ x^\ast \cdot x = x^\ast,\qquad z+x\cdot y \le y \Rightarrow x^\ast 
  \cdot z \le y,\qquad z + y \cdot x \le y \Rightarrow z\cdot x^\ast
  \le y.
\end{equation*}

An \emph{antidomain semiring}~\cite{DesharnaisStruth} is a semiring
$S$ endowed with an operation $a:S\to S$ that satisfies
\begin{equation*}
  a(x)\cdot x = 0,\qquad
a(x\cdot y)+ a (x \cdot a (a(y))) = a(x\cdot a(a(y)),\qquad
a(x)+a(a(x)) = 1.
\end{equation*}
These axioms imply that every antidomain semiring is a dioid.  A
\emph{domain operation} can be defined on $S$ as $d=a\circ a$. It is a
retraction, that is, $d\circ d= d$, and it follows that $x\in
d(S)\Leftrightarrow d(x)=x$, where $d(S)$ denotes the image of the set
$S$ under $d$. This fact can be used to show that
$(d(S),+,\cdot,a,0,1)$ forms a boolean algebra in which multiplication
coincides with meet and the antidomain operator $a$ yields test
complementation.  In addition we need the following fact about
antidomain semirings.
\begin{lemma}[\cite{DesharnaisStruth}]\label{P:weak-loc}
  In every antidomain semiring, $ x\cdot y = 0\Leftrightarrow x\cdot
  d(y)=0$.
\end{lemma}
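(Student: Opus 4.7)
The plan is to establish the two implications separately. The reverse direction is straightforward once one knows $y = d(y)\cdot y$, while the forward direction requires reading the second antidomain axiom as an order inequality and using $a(0) = 1$.

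For the implication $x\cdot d(y) = 0 \Rightarrow x\cdot y = 0$, I would first observe that axiom 3 gives $a(y) + d(y) = 1$ since $d = a\circ a$. Combining this with distributivity and axiom 1 yields
\begin{equation*}
y = (a(y) + d(y))\cdot y = a(y)\cdot y + d(y)\cdot y = 0 + d(y)\cdot y = d(y)\cdot y.
\end{equation*}
Assuming $x\cdot d(y) = 0$, associativity then gives $x\cdot y = (x\cdot d(y))\cdot y = 0$.

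For the forward direction $x\cdot y = 0 \Rightarrow x\cdot d(y) = 0$, the key observation is that the second antidomain axiom, read in the semilattice order $p \le q \Leftrightarrow p+q = q$, is precisely $a(x\cdot y) \le a(x\cdot d(y))$. Next I would establish $a(0) = 1$: axiom 1 applied at $x=1$ yields $a(1)\cdot 1 = 0$, so $a(1)=0$, and axiom 3 at $x=1$ then gives $d(1) = a(0) = 1$. Now if $x\cdot y = 0$, then $a(x\cdot y) = a(0) = 1$, whence $1 \le a(x\cdot d(y))$. Isotony of multiplication together with axiom 1 applied to $x\cdot d(y)$ yields
\begin{equation*}
x\cdot d(y) \;=\; 1\cdot (x\cdot d(y)) \;\le\; a(x\cdot d(y))\cdot (x\cdot d(y)) \;=\; 0,
\end{equation*}
so $x\cdot d(y) = 0$, as required.

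The main obstacle, as far as there is one, lies in the forward direction: axiom 2 is the only axiom that links products of arbitrary elements to the domain operator, and the essential insight is to rewrite it as the inequality $a(x\cdot y) \le a(x\cdot d(y))$. Once that reformulation is in place and the auxiliary identity $a(0) = 1$ is extracted, the remainder is a routine manipulation using isotony and the annihilation axiom $a(z)\cdot z = 0$.
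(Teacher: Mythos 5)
Your proof is correct: the paper states this lemma by citation to Desharnais--Struth without giving its own proof, and your argument is exactly the standard derivation used there --- the forward direction by reading the second antidomain axiom as the inequality $a(x\cdot y)\le a(x\cdot d(y))$ together with $a(0)=1$, and the reverse direction via $y=d(y)\cdot y$, which follows from the first and third axioms with right distributivity. All steps check out (in particular you only use right distributivity and isotony, both available since every antidomain semiring is a dioid), so there is nothing to repair.
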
 
\noindent A \emph{Kleene algebra with domain}~\cite{DesharnaisStruth} is both a
Kleene algebra and an antidomain semiring.

A \emph{test semiring} is a dioid $S$ in which a boolean algebra $B$
is embedded by a map $\iota:B\to S$ such that
\begin{equation*}
  \iota(0)=0,\qquad \iota(1)=1,\qquad \iota (x\sqcup y)=
  \iota(x)+\iota(y),\qquad \iota(x\sqcap y) = \iota(x)\cdot \iota(y).
\end{equation*}
A \emph{Kleene algebra with tests}~\cite{Kozen97} is both a Kleene
algebra and a test semiring.  In the tradition of Kleene algebras with
tests the embedding is left implicit. I write $p,q,r,\dots$ for
boolean elements, which are called \emph{tests}, and $x,y,z$ for
arbitrary semiring elements. I write $\AS$ for the class and axiom
system of domain semirings, $\TS$ for that of test semirings, $\KAD$
for that of Kleene algebras with domain and $\KAT$ for that of Kleene
algebras with tests.

\begin{lemma}[\cite{DesharnaisStruth}]
 $\KAD\subseteq \KAT$.
\end{lemma}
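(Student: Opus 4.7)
The plan is to produce, from a given $\KAD$ $S$, the boolean algebra $B$ and embedding $\iota$ required by the $\TS$ axioms, and then observe that the underlying Kleene algebra structure of $S$ is unchanged so that the resulting structure is a $\KAT$.

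First I would take $B$ to be the image $d(S)$ of the domain operation $d=a\circ a$. The excerpt already records that $(d(S),+,\cdot,a,0,1)$ forms a boolean algebra in which join is $+$, meet is $\cdot$, complement is $a$, bottom is $0$, and top is $1$; I would cite this fact rather than reprove it. Note that $0=a(1)\cdot 1 + 0 = d(0)$ and $1=a(0)+a(a(0)) = \ldots$, so $0,1\in d(S)$, which is needed for the embedding.

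Next I would define $\iota:d(S)\to S$ to be the inclusion map. The four $\TS$ axioms then reduce to statements that have already been established on $d(S)$: $\iota(0)=0$ and $\iota(1)=1$ hold because $0$ and $1$ are the bottom and top of the boolean algebra $d(S)$; $\iota(p\sqcup q)=\iota(p)+\iota(q)$ holds because the join in $d(S)$ is, by construction, the semiring sum of $S$ restricted to $d(S)$; and $\iota(p\sqcap q)=\iota(p)\cdot\iota(q)$ holds because the meet in $d(S)$ coincides with the semiring product restricted to $d(S)$. Thus $S$ carries a test semiring structure.

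Finally, since $S$ is by hypothesis a Kleene algebra and we have just exhibited a test semiring structure on it, $S$ is a Kleene algebra with tests, proving $\KAD\subseteq\KAT$. There is no real obstacle here: the substantive work is the boolean algebra structure on $d(S)$, which is imported from the cited source, and once that is in hand the embedding $\iota$ is literally inclusion, so the $\TS$ axioms are verified by inspection.
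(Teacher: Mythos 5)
Your proposal is correct and matches the paper's own proof essentially verbatim: both take $B=d(S)$ with the boolean algebra structure imported from the cited source and use the inclusion (identity) map as the embedding $\iota$, under which the $\TS$ axioms hold by inspection. Your extra verification that $0,1\in d(S)$ is a harmless elaboration the paper leaves implicit.
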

\begin{proof}
  If $K\in\KAD$ then $d(K)$ is a boolean algebra, hence a test
  algebra. The embedding is provided by the identity function on
  $d(K)$ as a subset of $K$. Thus $K\in\KAT$.
\end{proof}
It follows that $\AS\subseteq \TS$. Thus, for any $K\in\KAD$, all
elements in $d(K)$  may serve as tests in the associated
$(K,d(K))\in\KAT$.

The notions of domain, antidomain and tests can be motivated from the
model of binary relations.

\begin{proposition}[\cite{Kozen00,DesharnaisStruth}]
  Let $2^{A\times A}$ be the set of binary relations over the set
  $A$. Suppose that 
  \begin{gather*}
    R\cdot S =\{(a,b)\mid \exists c.\ (a,c)\in R \wedge (c,b)\in S\},\qquad
   \mathit{id}=\{(a,a)\mid a\in A\},\\
a(R) = \{(a,a)\mid \forall b.\  (a,b)\not\in R\},\qquad
%\mathit{ar}(R) &=\{(a,a)\mid \forall b.\ (b,a)\not\in R\},\\
R^\ast = \bigcup_{i\in\mathbb{N}} R^i,
  \end{gather*}
where $R^0=\mathit{id}$ and $R^{i+1}= R\cdot R^i$.  Then
\begin{enumerate}
\item $(2^{A\times A},\{R\mid R\subseteq
  \mathit{id}\},\cup,\cdot,\emptyset,\mathit{id},^\ast)\in \KAT$,
\item $(2^{A\times
    A},\cup,\cdot,\emptyset,\mathit{id},a,^\ast\}\in\KAD$.
\end{enumerate}
\end{proposition}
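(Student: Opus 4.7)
The plan is to verify the axioms of Kleene algebra, test semirings, and antidomain semirings separately in the relational model, then assemble them. The Kleene algebra axioms for $(2^{A\times A},\cup,\cdot,\emptyset,\mathit{id},{}^\ast)$ are classical: the semiring laws are routine set-theoretic verifications, and the star unfold and induction axioms follow from the standard least-fixpoint characterisation of reflexive-transitive closure. I would simply cite these rather than redo them in detail.

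For part~(1), I would observe that subidentity relations are in bijection with subsets of $A$ via $R\mapsto\{a:(a,a)\in R\}$. Under this correspondence $\cup$ becomes union, $\emptyset$ becomes $\emptyset$, $\mathit{id}$ becomes $A$, and the key computation is that composition of two subidentity relations equals their intersection. Hence the set of subidentity relations forms a boolean algebra isomorphic to $2^A$, and the embedding $\iota$ is simply the inclusion map. The four homomorphism conditions then follow immediately.

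For part~(2), I would write $D(R)=\{a\in A:\exists b.\,(a,b)\in R\}$ for the domain of $R$ and note that $a(R)=\{(a,a):a\notin D(R)\}$, so $d(R)=a(a(R))=\{(a,a):a\in D(R)\}$. The first antidomain axiom $a(R)\cdot R=\emptyset$ and the third axiom $a(R)+d(R)=\mathit{id}$ then drop out immediately: the former because any pair in the composition would need its source simultaneously inside and outside $D(R)$, the latter because $D(R)$ and its complement in $A$ partition $A$.

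The main obstacle will be the middle ``locality'' axiom, which under the additive order $x+y=y\Leftrightarrow x\le y$ unfolds to $a(R\cdot S)\subseteq a(R\cdot d(S))$. Because $a(T)\subseteq a(U)$ is equivalent to $D(U)\subseteq D(T)$, this reduces to $D(R\cdot d(S))\subseteq D(R\cdot S)$. I would verify this directly: if $a\in D(R\cdot d(S))$ then there exists $c$ with $(a,c)\in R$ and $c\in D(S)$, so some $e$ satisfies $(c,e)\in S$, giving $(a,e)\in R\cdot S$ and hence $a\in D(R\cdot S)$. With the three antidomain axioms checked and the Kleene algebra structure inherited from part~(1), part~(2) follows.
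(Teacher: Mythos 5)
Your proposal is correct, but note that the paper itself offers no proof of this proposition at all---it is stated with citations to Kozen and to Desharnais--Struth and left to the literature---so there is nothing in the paper to compare against line by line; your direct verification is exactly the argument those references contain. You correctly identify the one point where real work is needed: the locality axiom $a(R\cdot S)+a(R\cdot d(S))=a(R\cdot d(S))$, and your reduction via the antitone correspondence $a(T)\subseteq a(U)\Leftrightarrow D(U)\subseteq D(T)$ is the right move, after which the inclusion $D(R\cdot d(S))\subseteq D(R\cdot S)$ is the elementary chase you give. Two small points worth making explicit rather than implicit: first, when you unfold $a\in D(R\cdot d(S))$ to ``there exists $c$ with $(a,c)\in R$ and $c\in D(S)$,'' you are silently using that $d(S)$ is a subidentity, so that the pair witnessing membership in $d(S)$ has equal components; second, the third axiom $a(R)+a(a(R))=\mathit{id}$ rests on the computation $D(a(R))=A\setminus D(R)$, which again uses the subidentity shape of $a(R)$---you state the conclusion correctly but the one-line justification deserves to appear. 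Deferring the Kleene algebra axioms for $(2^{A\times A},\cup,\cdot,\emptyset,\mathit{id},{}^\ast)$ to the standard least-fixpoint argument is entirely appropriate, as is your observation in part~(1) that composition of subidentities is intersection, which is the single computation that makes the inclusion map $\iota$ a test-semiring embedding in the sense of the paper's four conditions.
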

The operation $\cdot$ on relations is the standard relational product;
$\mathit{id}$ is the identity relation on $S$. The operation $a$ is
the domain complement on relations; $a(R)$ represents those states in
$S$ that are not related by $R$ to any other state. 

%%%%%%%%%%%%%%%%%%%%%%%%%%%%%%%%%%%%%%%%%%%%%%%%%%%

\section{Expressive Power of $\KAD$ and Invertibility in $\PHL$}

To show that domain semirings are strictly more expressive than test
semirings and that Kleene algebras with domain are strictly more
expressive than Kleene algebras with tests I display a 
sentence $\varphi$ in the language of $\KAT$ such that
$\KAT\not\vdash\varphi$ and $\KAD\vdash\varphi$. To prove
that $\KAT\not\vdash\varphi$ I display a $(K,B)\in\KAT$ such that
$(K,B)\not\models\varphi$.

The sentence $\varphi$ chosen for this purpose is related to the
relative completeness of Hoare logic. It is well known that the
validity of a Hoare triple can be encoded in the language of $\KAT$~\cite{Kozen00},
and hence $\KAD$, as
\begin{equation*}
\{p\}x\{q\}
\Leftrightarrow p\cdot x \cdot \overline{q}=0,
\end{equation*}
where tests $p$ and $q$ serve as assertions and $\overline{q}$
represents the boolean complement of test $q$. Moreover the inference
rules of $\PHL$ are derivable in $\KAT$~\cite{Kozen00}. In particular
the rule $\{p\}x\{r\}\wedge \{r\}y\{q\}\Rightarrow \{p\}x\cdot y\{q\}$
for sequential composition can be derived in $\TS$ and $\AS$.
Invertibility of this rule means finding for any Hoare triple
$\{p\}x\cdot y\{q\}$ an assertion $r$ such that $\{p\}x\{r\}$ and
$\{r\}y\{q\}$. Hence consider the following sentence in the language
of $\KAT$:
\begin{equation*}
 \varphi \ \equiv \ (\forall x,y\in K, p\in B, q\in B.\ \{p\} x\cdot
 y\{q\}\  \Rightarrow \ (\exists r\in B.\ \{p\} x\{r\}\wedge \{r \} y \{q\})).
\end{equation*}
\begin{lemma}\label{P:katisbad}
  $\KAT\not\vdash\varphi$.
\end{lemma}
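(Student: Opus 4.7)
The plan is to exhibit a concrete model $(K,B)\in\KAT$ together with elements $x,y,p,q$ that witness the failure of $\varphi$. The guiding intuition is that in a $\KAD$ the natural intermediate assertion for $\{p\}x\cdot y\{q\}$ is $r=d(p\cdot x)$, which always exists; in a general $\KAT$, however, the Boolean algebra $B$ of tests may be too coarse to express precisely this predicate. So I would choose $K$ rich enough that $\{p\}x\cdot y\{q\}$ holds nontrivially while deliberately keeping $B$ small.

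Concretely, I would take $K$ to be the full relational $\KAT$ on the three-element set $A=\{1,2,3\}$ and let $B$ be the four-element Boolean subalgebra of the test algebra generated by $p=\mathit{id}_{\{1\}}$; thus $B=\{\emptyset,\mathit{id}_{\{1\}},\mathit{id}_{\{2,3\}},\mathit{id}_A\}$. That $(K,B)\in\KAT$ is immediate, since $B$ is a sub-Boolean algebra of the canonical test algebra of $K$, embedded via inclusion; the required axioms are inherited.

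I would then set $x=\{(1,2)\}$, $y=\{(2,2),(3,1)\}$, and $q=\overline p=\mathit{id}_{\{2,3\}}$. A short computation gives $x\cdot y=\{(1,2)\}$ and hence $p\cdot x\cdot y\cdot\overline q=\emptyset$, so $\{p\}x\cdot y\{q\}$ holds. For the impossibility of an intermediate $r$, I would observe that $p\cdot x=\{(1,2)\}$ and $y\cdot\overline q=\{(3,1)\}$; thus $\{p\}x\{r\}$ forces $(2,2)\in r$ while $\{r\}y\{q\}$ forces $(3,3)\notin r$. Inspecting the four elements of $B$ shows that none satisfies both constraints simultaneously, so $\varphi$ fails in $(K,B)$.

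The only substantive obstacle is calibrating the example so that both conditions bite at once: a two-element $A$, or any $B$ containing $\mathit{id}_{\{2\}}$, would let some $r$ slip through, and one has to arrange $y$ so that it mistreats state $3$ badly enough to rule out enlarging $r$ to $\overline p$ or $1$. Three states and the four-element $B$ are the minimal configuration I can see; once the construction is in place, the verification is a short check against the definitions.
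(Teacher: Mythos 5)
Your proof is correct, and it shares the paper's high-level strategy---the paper likewise proves the lemma by displaying a $(K,B)\in\KAT$ with $(K,B)\not\models\varphi$---but your countermodel is genuinely different. The paper uses a tiny abstract algebra: carrier $\{0,a,1\}$ with $0\le a\le 1$, $a\cdot a=0$, $a^\ast=1$, and tests $\{0,1\}$; taking $p=1$, $x=y=a$, $q=0$ makes $p\cdot x\cdot y\cdot\overline{q}=0$ while neither $r=0$ nor $r=1$ validates both triples. That algebra is not relational (a subidentity relation is multiplicatively idempotent, whereas $a\le 1$ and $a\cdot a=0\neq a$), so the paper's witness is a somewhat artificial structure whose $\KAT$ axioms must be checked by hand. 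Your model is instead the full relational $\KAT$ on three states with a deliberately coarse four-element test algebra, which is a $\KAT$ for free, and your computations check out: $p\cdot x=\{(1,2)\}$ forces $2\in r$, $y\cdot\overline{q}=\{(3,1)\}$ forces $3\notin r$, and no element of $B$ satisfies both. What your version buys is that invertibility fails already in the intended relational semantics once the assertion language is too poor---exactly the mechanism the paper uses later to show $\KAT$ is not expressive for $\PHL$---at the cost of a slightly larger example; your minimality remark is in fact not quite right, since two states with $B=\{\emptyset,\mathit{id}\}$, $x=\{(1,1)\}$, $y=\{(2,2)\}$, $p=\mathit{id}$, $q=\emptyset$ already refute $\varphi$, as does the paper's three-element algebra. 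One slip in your motivating sentence (harmless, since your argument never uses it): the canonical $\KAD$ witness is not $d(p\cdot x)$ but the weakest liberal precondition $a(y\cdot\overline{q})=[y]q$, or dually the range $r(p\cdot x)$ mentioned in the paper's concluding remarks; $d(p\cdot x)$ generally fails $\{p\}x\{d(p\cdot x)\}$---indeed in your own model $d(p\cdot x)=\mathit{id}_{\{1\}}$ and $p\cdot x\cdot\overline{d(p\cdot x)}=\{(1,2)\}\neq\emptyset$.
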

\begin{proof}
  Consider the $\KAT$ $(\{a\},\{0,1\},+,\cdot,0,1,^\ast)$ with
  addition defined by $0\le a\le 1$, multiplication by $a\cdot a = 0$
  and $a^\ast = 1$ (all other operations on elements being fixed).
  Note that $a$ is not a test because $a\cdot a\neq a$. In this
  algebra, $1 \cdot a \cdot a \cdot \overline{0} = 1 \cdot 0 \cdot
  1=0$. However, $r$ can neither be $0$ or $1$. In the first case,
  $1\cdot a\cdot \overline{0}=1\cdot a \cdot 1= 1$; in the second one,
  $1\cdot a\cdot\overline{0}=1\cdot a\cdot 1=a$.
\end{proof}

\begin{lemma}\label{P:kadisgood}
  $\AS\vdash\varphi$.
\end{lemma}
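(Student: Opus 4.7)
The plan is to exhibit $r = a(y\cdot \overline{q})$ as the required intermediate test and then verify the two component Hoare triples. The intuition, read off the relational model, is that $r$ should consist of exactly those states from which the residual computation $y$ cannot reach a $\overline{q}$-state; this set is precisely the antidomain of $y\cdot \overline{q}$, which has no counterpart in the pure test-semiring language.

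First I would confirm that $r$ is a test, i.e.\ $r\in d(S)$. This is a standard consequence of the antidomain axioms: one shows $d(a(x))=a(x)$ for every $x$, so the image of $a$ lies inside $d(S)$, on which the excerpt has already identified $a$ with boolean complement. In particular $\overline{r}=d(y\cdot \overline{q})$, a fact I will use below.

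Next I would dispatch $\{r\}y\{q\}$ directly. Applying the first antidomain axiom $a(z)\cdot z=0$ with $z=y\cdot \overline{q}$ and reassociating yields $r\cdot y\cdot \overline{q}=a(y\cdot \overline{q})\cdot(y\cdot \overline{q})=0$. For $\{p\}x\{r\}$, I must show $p\cdot x\cdot \overline{r}=0$. Substituting $\overline{r}=d(y\cdot \overline{q})$ turns this into $p\cdot x\cdot d(y\cdot \overline{q})=0$, and by Lemma~\ref{P:weak-loc} (weak locality) applied to the element $p\cdot x$ with argument $y\cdot \overline{q}$, this is equivalent to $p\cdot x\cdot y\cdot \overline{q}=0$, which is the hypothesis $\{p\}(x\cdot y)\{q\}$.

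The conceptual heart of the proof is guessing the right $r$; once chosen, the verification consumes only one antidomain axiom plus one invocation of weak locality, with no use of the Kleene star, so the argument in fact goes through in $\AS$ rather than $\KAD$. The main obstacle is therefore not a calculation but a matter of expressive power: the witness $r$ is synthesized from $y$ and $\overline{q}$ via the antidomain operator, which is exactly the construction unavailable in $\TS$/$\KAT$ and thus underlies the separation Lemma~\ref{P:katisbad} has just established.
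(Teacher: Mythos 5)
Your proof is correct and takes essentially the same route as the paper's: the identical witness $r=a(y\cdot\overline{q})$, with $\{r\}y\{q\}$ discharged by the first antidomain axiom $a(z)\cdot z=0$ and $\{p\}x\{r\}$ by Lemma~\ref{P:weak-loc} applied to $p\cdot x$ and $y\cdot\overline{q}$. Your additional verification that $r\in d(S)$ and that $\overline{r}=d(y\cdot\overline{q})$ merely makes explicit a step the paper's proof leaves implicit, so there is nothing to correct.
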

\begin{proof}
  Let $S\in\AS$ and suppose $p\cdot x \cdot y \cdot \overline{q}=0$,
  with $p,q\in d(S)$. We need an expression $r$ such that $p\cdot
  x\cdot\overline{r}=0$ and $r\cdot y\cdot \overline{q}=0$. So let $r=
  a(y\cdot \overline{q})$. The assumption  and Lemma~\ref{P:weak-loc}
  then imply that $p\cdot
  x\cdot \overline{r} = p \cdot x\cdot d(y\cdot
  \overline{q})=0$. Moreover, $r \cdot y\cdot\overline{q} = a (y\cdot
  \overline{q}) \cdot y \cdot \overline{q} =0$ follows from the first
  antidomain axiom.
\end{proof}
\noindent These two lemmas can be summarised as follows.
\begin{theorem}\label{P:main}
  There exists a sentence in the language of $\KAT$ which
  is derivable from the $\AS$ axioms, but not from the $\KAT$ axioms.
\end{theorem}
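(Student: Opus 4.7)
The plan is to take the sentence $\varphi$ already singled out above as the separating formula and then combine Lemmas~\ref{P:kadisgood} and~\ref{P:katisbad}. First I would check the syntactic side-condition that $\varphi$ really is a sentence in the language of $\KAT$. This is immediate from its statement: it mentions only the dioid operations, the Kleene star, tests, and boolean complement on tests (via the encoding $\{p\}x\{q\}\Leftrightarrow p\cdot x\cdot\overline{q}=0$), and in particular the quantifier $\exists r\in B$ ranges over tests, a notion available both in $\TS$ and in the associated boolean algebra $d(S)$ of any $S\in\AS$.

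Next I would invoke Lemma~\ref{P:kadisgood} to conclude $\AS\vdash\varphi$. It is worth noting that although the witness $r=a(y\cdot\overline{q})$ constructed there uses the antidomain operator, $\varphi$ itself merely asserts the \emph{existence} of a test $r$, so the derivation produced by that lemma is a derivation of a pure $\KAT$-language sentence from the $\AS$ axioms, which is exactly what the theorem requires.

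Then I would invoke Lemma~\ref{P:katisbad}, which exhibits a concrete three-element $\KAT$ in which $\varphi$ fails. By soundness, no derivation of $\varphi$ from the $\KAT$ axioms can exist, so $\KAT\not\vdash\varphi$. Putting the two derivability statements together yields the theorem, and, since $\KAD$ extends $\AS$, simultaneously witnesses that $\KAD$ is strictly more expressive than $\KAT$.

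The combination step itself is entirely routine; the mathematical content sits in the two preceding lemmas, not in their packaging. The only genuine subtlety worth flagging is the one already mentioned above: $\varphi$ must be phrased so as to live in the common language of $\AS$ and $\KAT$, for otherwise the comparison ``derivable from $\AS$ but not from $\KAT$'' would be meaningless. Once this is checked, nothing else stands in the way.
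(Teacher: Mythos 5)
Your proposal is correct and matches the paper exactly: the paper states Theorem~\ref{P:main} as an immediate summary of Lemmas~\ref{P:katisbad} and~\ref{P:kadisgood}, which is precisely your combination step. Your additional care in checking that $\varphi$ lives in the common language of $\AS$ and $\KAT$ (with the witness $r$ only needing to \emph{exist} as a test) is a sound and worthwhile elaboration of the same argument, not a departure from it.
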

Thus antidomain semirings are strictly more expressive than test
semirings, and  Kleene algebras with domain are strictly more
expressive than Kleene algebras with tests.

%%%%%%%%%%%%%%%%%%%%%%%%%%%%%%%%%%%%%%%%%%%%%%%%%%%%%%%%%%%%

\section{Expressive Power of $\KAD$ and Expressivity of $\PHL$}

The question of invertibility of the rules of Hoare logic relates to
its expressivity, requiring that for each command $x$ and
postcondition $q$ the weakest liberal precondition be definable. In
any $K\in \KAD$, the weakest liberal precondition
exists for any element $x\in K$ and test $p\in d(K)$ by definition.

Formally, for all $x,y\in K$  one can define a modal box operator
\begin{equation*}
  [x]y = a(x \cdot a(y))
\end{equation*}
and show that $p\le [x]q \Leftrightarrow p\cdot x\cdot q = 0$. So
$\{p\}x\{p\} \Leftrightarrow p\le [x]q$ yields an alternative
definition of the validity of Hoare triples, in which $\lambda p.\
[x]p:d(K)\to d(K)$ is a predicate transformer~\cite{MoellerS06}.

It follows that $\{[x]q\}x\{q\}$---$]x]p$ is a precondition for $x$
and $q$---and $\{p\}x\{q\}\Rightarrow p\le [x]q$---$[x]p$ is weaker
than any other precondition of $x$ and $q$.  Hence $[x]q$ models
indeed the weakest liberal precondition of $x$ and $q$. Since the
standard relational semantics of while programs withouth the
assignment rules can be captured in $\KAT$~\cite{Kozen97} (and $\KAD$)
by defining $\mathbf{if}\ p\ \mathbf{then}\ x\ \mathbf{else} \ y =
p\cdot x+\overline{p}\cdot y$ and $\mathbf{while}\ p\ \mathbf{do}\ x =
(p\cdot x)^\ast \cdot \overline{p}$, the following fact is obvious.
\begin{theorem}
 $\KAD$ is expressive for $\PHL$. 
\end{theorem}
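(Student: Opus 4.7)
The plan is to combine two facts already established in this section: first, for every semiring element $x \in K$ and every test $q \in d(K)$, the box $[x]q = a(x \cdot a(q))$ lies in $d(K)$ and satisfies $p \le [x]q \Leftrightarrow p \cdot x \cdot \overline{q} = 0$, so $[x]q$ is precisely the weakest liberal precondition of $x$ with respect to $q$; second, all program constructors of $\PHL$ are definable in $\KAD$ via $p\cdot x + \overline{p}\cdot y$ for conditionals and $(p\cdot x)^\ast \cdot \overline{p}$ for loops, with sequential composition given by $\cdot$.

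My first step would be to argue by structural induction that every while-program denotes a semiring element of $K$: atomic commands are given, tests are elements of $d(K) \subseteq K$, and the constructors above preserve membership in $K$ since $K$ is closed under $+$, $\cdot$, and $(-)^\ast$. Having done this, I would invoke the box construction: for any such program $x$ and any test $q \in d(K)$, the element $[x]q \in d(K)$ is a syntactically available wlp, so the assertion language (the tests in $d(K)$) is rich enough to name the weakest liberal precondition of any program-postcondition pair. This is exactly the expressivity condition required by $\PHL$ for relative completeness.

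I do not foresee a substantive obstacle, which matches the paper's remark that the conclusion is ``obvious.'' The only point that would need a brief comment is that the algebraic encodings of $\mathbf{if}$ and $\mathbf{while}$ genuinely capture the intended operational semantics of while-programs, so that the algebraic wlp $[x]q$ coincides with the wlp in the $\PHL$ sense; but this correspondence is the standard one already cited from Kozen's work on $\KAT$, and it carries over to $\KAD$ verbatim since every $\KAD$ is a $\KAT$. Thus the theorem follows immediately from the definability of $[x]q$ together with the definability of the program constructors.
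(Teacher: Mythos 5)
Your proposal is correct and follows essentially the same route as the paper: the box operator $[x]q = a(x\cdot a(q))$ is definable in $\KAD$ and characterises the weakest liberal precondition via $p\le [x]q \Leftrightarrow p\cdot x\cdot\overline{q}=0$, and the standard $\KAT$ encodings of conditionals and loops make every while-program denotable, so expressivity is immediate. Your explicit structural induction and the remark on the adequacy of the encodings merely spell out steps the paper leaves implicit in calling the result ``obvious.''
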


The proof of Lemma~\ref{P:kadisgood} can now be rewritten in the light
of this discussion.  First of all, $r=[y]q$ models precisely the
weakest liberal precondition of $y$ and $q$.  The next Lemma then
arises as an instance of $\varphi$ in combination with the sequential
composition rule of Hoare logic.
\begin{lemma}
$\AS\vdash  \{p\}x\cdot y\{q\} \Leftrightarrow \{p\}x\{[y]q\}$.
\end{lemma}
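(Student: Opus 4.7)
The plan is to reduce both Hoare triples to their standard zero-equation encoding and then to observe that the two resulting equations are equivalent by a single application of Lemma~\ref{P:weak-loc}.

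First I would unfold the two sides. The left-hand triple $\{p\}x\cdot y\{q\}$ translates directly to $p\cdot x\cdot y\cdot\overline{q}=0$. For the right-hand triple $\{p\}x\{[y]q\}$ I would first rewrite $[y]q=a(y\cdot a(q))=a(y\cdot \overline{q})$, using the fact that $a$ coincides with boolean complement on the test $q$. Because every element of the form $a(t)$ lies in $d(S)$, the element $[y]q$ is itself a test, whose boolean complement is $a([y]q)=a(a(y\cdot \overline{q}))=d(y\cdot \overline{q})$. Consequently the right-hand triple unfolds to $p\cdot x\cdot d(y\cdot\overline{q})=0$.

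Second, I would invoke Lemma~\ref{P:weak-loc} with outer factor $p\cdot x$ and inner factor $y\cdot\overline{q}$, which gives
\[
p\cdot x\cdot y\cdot\overline{q}=0 \ \Leftrightarrow\ p\cdot x\cdot d(y\cdot\overline{q})=0
\]
and thereby equates the two unfoldings. The whole argument is essentially a one-line use of weak locality once the syntactic bookkeeping is in place. The only minor subtlety is the identification of $\overline{[y]q}$ with $d(y\cdot\overline{q})$, but this is automatic from the boolean algebra structure on $d(S)$ together with $d=a\circ a$, so no real obstacle is anticipated.
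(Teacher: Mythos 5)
Your proof is correct, and it takes a more direct route than the paper's. The paper obtains this lemma as a corollary of its earlier development: the direction $\{p\}x\cdot y\{q\}\Rightarrow\{p\}x\{[y]q\}$ is the instance $r=[y]q=a(y\cdot\overline{q})$ of Lemma~\ref{P:kadisgood} (whose proof is precisely your forward direction), while the converse is recovered from the sequential composition rule of $\PHL$---derivable in $\AS$---combined with the valid triple $\{[y]q\}y\{q\}$, which follows from the first antidomain axiom $a(y\cdot\overline{q})\cdot y\cdot\overline{q}=0$; this is the ``deleted conjunct'' the paper mentions after the lemma. You instead observe that, since $\overline{[y]q}=a(a(y\cdot\overline{q}))=d(y\cdot\overline{q})$, the two triples unfold exactly to the two sides of the weak-locality equivalence
\[
p\cdot x\cdot y\cdot\overline{q}=0\ \Leftrightarrow\ p\cdot x\cdot d(y\cdot\overline{q})=0,
\]
so a single application of Lemma~\ref{P:weak-loc} yields both directions simultaneously, with no appeal to the composition rule. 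What your version buys is economy and symmetry: the lemma is revealed as weak locality in disguise. What the paper's version buys is the conceptual reading that $[y]q$ is the weakest liberal precondition, the converse direction being nothing but soundness of sequential composition. One bookkeeping point you rightly flag but state without proof: that $[y]q$ is a test, so that $\overline{[y]q}=a([y]q)$ makes sense, rests on $d(a(z))=a(z)$ (equivalently $a\circ a\circ a=a$) holding in every antidomain semiring; this is a standard fact of~\cite{DesharnaisStruth} and only a one-line derivation, so it is a citation to add rather than a gap.
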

\noindent The second, implicit  conjunct is of course $\{[x]q\}x\{q\}$. It is
valid and has therefore been deleted. 

In $\KAT$ the situation is different.
\begin{theorem}
$\KAT$ is not expressive for $\PHL$.
\end{theorem}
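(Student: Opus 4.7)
The plan is to argue by contrapositive, reusing the counterexample $\KAT$ from Lemma~\ref{P:katisbad}. I read ``expressive for $\PHL$'' as demanding that, for every $y \in K$ and $q \in B$, some test $\mathrm{wlp}(y, q) \in B$ characterises the weakest liberal precondition via $p \cdot y \cdot \overline{q} = 0 \Leftrightarrow p \le \mathrm{wlp}(y, q)$ for all tests $p$, in parallel with the role of $[y]q$ for $\KAD$ in Section~4. The task is then to exhibit a $\KAT$ and a pair $(y, q)$ for which no such test exists.

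The central reduction is that the existence of such wlp-tests in a $\KAT$ already forces $\varphi$ from Section~3. Given $\{p\}x \cdot y\{q\}$, the candidate $r := \mathrm{wlp}(y, q)$ satisfies $\{r\}y\{q\}$ by construction, and the wlp-characterisation, combined with the composition law $\mathrm{wlp}(x \cdot y, q) \le \mathrm{wlp}(x, \mathrm{wlp}(y, q))$ that any reasonable wlp must obey, yields $\{p\}x\{r\}$. Hence Lemma~\ref{P:katisbad} immediately delivers the theorem once this reduction is cleanly formulated: the same $\KAT$ that refutes $\varphi$ cannot admit wlp-tests for all its commands.

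Concretely, in the two-element model of Lemma~\ref{P:katisbad} one may check directly: any putative $\mathrm{wlp}(a, 0)$ must be $0$ (since the only test $r$ with $r \cdot a \cdot \overline{0} = 0$ is $r = 0$), whereas $\mathrm{wlp}(a \cdot a, 0)$ must be $1$ (since $\{1\}a \cdot a\{0\}$ holds and thus forces $1 \le \mathrm{wlp}$). The composition law would then demand $1 = \mathrm{wlp}(a \cdot a, 0) \le \mathrm{wlp}(a, \mathrm{wlp}(a, 0)) = \mathrm{wlp}(a, 0) = 0$, a contradiction. The main obstacle is simply to fix a precise meaning for ``expressive for $\PHL$'' and to justify the composition law for wlp-tests inside $\KAT$ without access to Lemma~\ref{P:weak-loc}; once one commits to the Section~4 reading, the theorem is essentially a corollary of Lemma~\ref{P:katisbad}.
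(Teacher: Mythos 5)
Your argument does not establish non-expressivity; it re-proves Lemma~\ref{P:katisbad}. Under your own stated reading, expressivity demands only the \emph{existence} of a test $\mathrm{wlp}(y,q)$ satisfying $p\cdot y\cdot\overline{q}=0 \Leftrightarrow p\le \mathrm{wlp}(y,q)$ — and in the three-element model such tests always exist. Indeed, in any $\KAT$ whose test algebra is finite, the sum of all tests $p$ with $p\cdot y\cdot\overline{q}=0$ again satisfies $p\cdot y\cdot\overline{q}=0$ (by right distributivity) and is therefore the weakest liberal precondition; your own computations confirm this, exhibiting $\mathrm{wlp}(a,0)=0$ and $\mathrm{wlp}(a\cdot a,0)=1$ as perfectly good tests meeting the characterisation. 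So the model of Lemma~\ref{P:katisbad} \emph{is} expressive, and the contradiction you derive is only against the additional composition postulate $\mathrm{wlp}(x\cdot y,q)\le \mathrm{wlp}(x,\mathrm{wlp}(y,q))$. That postulate is not part of the definition of expressivity, and it cannot be ``justified inside $\KAT$'' as you hope: when all wlps exist it is exactly equivalent to $\varphi$ (the optimal witness $r$ in $\varphi$ is always $\mathrm{wlp}(y,q)$), so your own model simultaneously satisfies wlp-existence and refutes the law. Your reduction ``expressivity $\Rightarrow \varphi$'' is thus false in $\KAT$, and the paper's separation of Lemma~\ref{P:katisbad} from this theorem is precisely the point: failure of $\varphi$ and failure of expressivity are different phenomena, the first compatible with finite models, the second not. (In $\KAD$ the composition law does hold, but via Lemma~\ref{P:weak-loc}, i.e.\ via the domain axioms, not via the wlp characterisation alone.)

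Since every finite test algebra yields all wlps as finite suprema, a genuine counterexample must have an infinite, incomplete test algebra, and this is what the paper constructs: the $\KAT$ $(2^A,\mathscr{B},\cup,\cap,\emptyset,A,{}^\ast)$ over an infinite set $A$, where $\mathscr{B}$ consists of the finite and cofinite subsets. For $C$ neither finite nor cofinite, every test $p\in\mathscr{B}$ with $p\cap C=\emptyset$ is contained in $\overline{C}$, hence finite (a cofinite subset of $\overline{C}$ would make $C$ finite), and can always be enlarged by a fresh point of the infinite set $\overline{C}$ while remaining a valid precondition. So the set of preconditions of $C$ and $\emptyset$ has no maximal element and $\mathrm{wlp}(C,\emptyset)$ does not exist. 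If you want to salvage your contrapositive framing, this existence failure — not the composition law — is what must be exhibited, and it forces you away from the Lemma~\ref{P:katisbad} model to one like the paper's.
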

\begin{proof}
  Let $A$ be an infinite set and $\mathscr{B}=\{B\subseteq A \mid B
  \text{ is finite}\}\cup\{B\subseteq A\mid B \text{ is
    cofinite}\}$. It has been shown that
  $(2^A,\mathscr{B},\cup,\cap,\emptyset,S,^\ast)\in \KAT$ in which
  $B^\ast = A$ for all $B\subseteq
  A$~\cite{DesharnaisMS06}. The test algebra $\mathscr{B}$ is
  not complete because suprema of infinitely many finite sets need not
  be in $\mathscr{B}$.

  Consider the set $C\in 2^A-\mathscr{B}$ and suppose that $a(C\cap
  a(\emptyset))= a(C\cap A)=a(C)$, the weakest liberal precondition of
  $C$ and $\emptyset$, exists. Thus $a(C)\cap C=\emptyset$ by
  definition. In addition, $C$ has of course a complement
  $\overline{C}\in A-\mathscr{B}$ as well. It follows that
  $a(C)\subset\overline{C}$ and hence $\overline{C}-a(C)\neq\emptyset$.

  So let $x\in \overline{C}-a(C)$ and consider the set
  $a(C)\cup\{x\}$.  By construction it is an element of $\mathscr{B}$
  that contains $a(C)$ and still satisfies $(a(C)\cup\{x\})\cap
  C=\emptyset$. This contradicts the maximality assumption on $a(C)$.

Hence there is a $\KAT$ in which for some element $x$ and test $p$ the
weakest liberal precondition $[x]p$ of $x$ and $p$ does not exist and
$\KAT$ is not expressive for $\PHL$.
\end{proof}

%%%%%%%%%%%%%%%%%%%%%%%%%%%%%%%%%%%%%%%%%%%%%%%%%%%%%%%%%%%%

\section{Concluding Remarks} 

The left distributivity law $x\cdot (y+z)= x\cdot z + y \cdot z$ is
not needed in the proof of Lemma~\ref{P:kadisgood} (and
Lemma~\ref{P:weak-loc}). Formula $\varphi$ can be derived
already from the axioms of antidomain
near-semirings~\cite{DesharnaisS08} and Kleene algebras with
domain based on near-semirings are already more expressive than
$\KAT$.

The result of Lemma~\ref{P:kadisgood} can be dualised and extended, so
that other solutions for $r$ can be found. A notion of opposition
duality can be defined on a semiring by swapping the order of
multiplication. Obviously, the opposite of every Kleene algebra is
again a Kleene algebra.  The domain operation on a semiring translates
to a range operation on the opposite semiring, and vice
versa~\cite{DesharnaisStruth}.  Thus an antirange and a range
operation on a semiring can be axiomatised by $x\cdot \aran(x)= 0$,
$\aran(x\cdot y)+ \aran(r(x)\cdot y)= \aran(r(x)\cdot y)$ and
$\aran(x)+r(x)=1$. It is then easy to check that $r=r(p\cdot x)$
provides a solution to a dual variant of Lemma~\ref{P:kadisgood}. The
proof uses the fact that $x\cdot y=0$ is equivalent to $r(x)\cdot y=0$
in antirange semirings, which is obtained from Lemma~\ref{P:weak-loc}
by opposition duality.

One can also consider Kleene algebras with antidomain and antirange
operations.  It is then appropriate to impose $d(\aran(x))=\aran(x)$
and $r(a(x))= a(x)$ to enforce that $d(S)$ and $r(S)$
coincide~\cite{DesharnaisStruth}.  In this context, also $r=r(p\cdot
x)\cdot a(y\cdot \overline{q})$ provides a  third solution to a generalised
variant of Lemma~\ref{P:kadisgood}.

A final remark concerns the invertibility of the remaining inference
rules of propositional Hoare logic.  Invertibility of  the consequence
rule(s) is trivial.  The equivalence
\begin{equation*}
    \{p\cdot t\}x\{q\}\wedge \{p\cdot\overline{t}\}y\{q\}\Leftrightarrow \{p\}\mathbf{if}\ p\
  \mathbf{then}\ x\ \mathbf{else}\ y\{q\}
\end{equation*}
is derivable in $\KAT$: that $\{p\}\mathbf{if}\ p\ \mathbf{then}\ x\
\mathbf{else}\ y\{q\}$ implies $\{p\cdot t\}x\{q\}$, for instance, is
verified by
  \begin{equation*}
    0=t\cdot 0 = t\cdot p\cdot (t \cdot 
  x + \overline{t} \cdot y)\cdot \overline{q}= (p\cdot t\cdot t \cdot 
  x\cdot \overline{q}
  + p \cdot t \cdot \overline{t}\cdot y\cdot \overline{q} = p\cdot t 
  \cdot x\cdot \overline{q}+ 0= \{p\cdot t\}x\{q\}. 
  \end{equation*}
For the while rule $\{p\cdot t\}x\{p\}\Rightarrow
\{p\}\mathbf{while}\ t\ \mathbf{do}\ x\{p\cdot\overline{t}\}$, the
stronger consequent $\{p\}(t\cdot x)^\ast\{p\}$---the while loop
satisfies the invariant $p$---is derivable from the antecedent, and
invertibility follows from
\begin{equation*}
  p\cdot t \cdot x \cdot \overline{p}\le p\cdot (t\cdot x)^\ast \cdot \overline{p}=0.
\end{equation*}

\bibliographystyle{plain}
\bibliography{kad_expressivity}

\end{document}